\newcommand{\ignore}[1]{}
\newcommand{\R}{\mathbb{R}}
\renewcommand{\max}{\mathrm{max}}
\renewcommand{\epsilon}{\varepsilon}
\newcommand{\eps}{\epsilon}
\newcommand{\Diag}{\text{\rm Diag}}
\def\01{\{0,1\}}
\newtheorem{defin}{Definition} 
\newtheorem{definition}[defin]{Definition}
\newtheorem{theorem}{Theorem}
\newtheorem*{theorem*}{Theorem}
\newtheorem{lemma}[defin]{Lemma}
\newtheorem*{claim*}{Claim}
\newtheorem*{conjecture*}{Conjecture}
\theoremstyle{definition}
\newcommand{\ma}{\mathbf{A}}
\newcommand{\mh}{\mathbf{H}}
\newcommand{\mv}{\mathbf{V}}
\newcommand{\mw}{\mathbf{W}}
\newcommand{\defeq}{:=}
\newcommand{\norm}[1]{\|#1\|}
\newcommand{\normFull}[1]{\left\|#1\right\|}
\newcommand{\tr}{\mathrm{tr}}
\newcommand{\vzero}{\mathbf{0}}
\newcommand{\tO}{\widetilde{O}}
\newcommand{\poly}{\mathrm{poly}}
\newcommand{\iter}{\mathcal{T}}
\newcommand{\sigmaipw}{\sigma_{i}(\mw^{\frac{1}{2}-\frac{1}{p}}\ma)}
\newcommand{\sigmapw}{\sigma(\mw^{\frac12-\frac1p} \ma)}
\newcommand{\sigmaipv}{\sigma_{i}(\mv^{\frac{1}{2}-\frac{1}{p}}\ma)}
\title{On computing approximate Lewis weights}
\author{Simon Apers\thanks{Universit\'e Paris Cit\'e, CNRS, IRIF, Paris, France. \texttt{apers@irif.fr}} \and Sander Gribling\thanks{Tilburg University, the Netherlands. \texttt{s.j.gribling@tilburguniversity.edu}} \and Aaron Sidford\thanks{Stanford University. \texttt{sidford@stanford.edu}}}
\begin{document}

\maketitle

\begin{abstract}
In this note we provide and analyze a simple method that given an $n \times d$ matrix, outputs approximate $\ell_p$-Lewis weights, a natural measure of the importance of the rows with respect to the $\ell_p$ norm, for $p \geq 2$. More precisely, we provide a simple post-processing procedure that turns natural one-sided approximate $\ell_p$-Lewis weights into two-sided approximations. When combined with a simple one-sided approximation algorithm presented by Lee (PhD thesis, `16) this yields an algorithm for computing two-sided approximations of the $\ell_p$-Lewis weights of an $n \times d$-matrix using $\poly(d,p)$ approximate leverage score computations. While efficient high-accuracy algorithms for approximating $\ell_p$-Lewis had been established previously by Fazel, Lee, Padmanabhan and Sidford (SODA `22), the simple structure and approximation tolerance of our algorithm may make it of use for different applications.
\end{abstract}

\section{Introduction}

For a matrix $\ma \in \R^{n \times d}$, its \emph{leverage scores} $\sigma(\ma) \in \R^n$, are a prominent importance measure of its rows.
The leverage score of row $i$, $\sigma_i(\ma) = [\sigma(\ma)]_i$ is defined as
\[
\sigma_i(\ma)
\defeq a_i^\top(\ma^\top \ma)^+ a_i,
\quad \forall i \in [n],
\]
where $a_i^\top$ is the $i$-th row of $A$ and $(\ma^\top \ma)^+$ is the pseudoinverse of $\ma^\top \ma$.
It holds that $\sum_i \sigma_i(\ma) \leq d$, with equality if $A$ has rank $d$.
Leverage scores have widespread applications in statistics \cite{rudelson2007sampling}, randomized linear algebra \cite{drineas2012fast}, and graph algorithms \cite{spielman2008graph}.

While leverage scores measure row importance with respect to the $\ell_2$-norm, there is a generalized notion called \emph{Lewis weights} that measures row importances with respect to the $\ell_p$-norm~\cite{Lewis78,BLM89,cohen2015lp}.
For finite $p > 0$, the $\ell_p$-Lewis weights of $\ma$ are defined as the unique vector $w \in \R^n_{> 0}$ that satisfies 
\begin{equation} \label{eq:Lewis}
w = \sigmapw, 
\end{equation}
We denote the $\ell_p$-Lewis weights of $\ma$ by $w_p(\ma)$, and note that $w_2(\ma) = \sigma(\ma)$.
Lewis weights are connected to computational geometry \cite{todd2016minimum}, and have applications in $\ell_p$-regression, e.g., \cite{cohen2015lp,JLS22}, and convex optimization, e.g.,  \cite{lee2019solving,BLLSSWW21}.

Lewis weights lack an explicit description, and computing them is somewhat more involved  than the $p=2$ case of leverage scores. For $p \in (0,4)$ there exists a (simple) iterative scheme, based on computing leverage scores~\cite{cohen2015lp}. For $p \geq 4$ the task is more challenging, see for example the recent works~\cite{cohen2015lp,lee2019solving,fazel22,Woodruff2023online}. 
Luckily, as for leverage scores, an \emph{$\eps$-estimate} of the form
\begin{equation} \label{eq:mult}
(1-\eps) w_p(\ma)
\leq v
\leq (1+\eps) w_p(\ma),
\end{equation}
where the inequality is elementwise,
often suffices in applications such $\ell_p$-embedding and -regression problems~\cite{talagrand1990embedding,cohen2015lp,Woodruff2023online}, approximating John ellipsoids~\cite{lee16}, or interior point methods~\cite{apers2023quantum}.

The implicit definition in \cref{eq:Lewis} suggests that alternative, even weaker notions of approximation may be of utility.
For example, we call a vector $w$ a \emph{two-sided $\eps$-approximation} of the Lewis weights if it approximately satisfies the fixed-point equation \eqref{eq:Lewis} in the following way: 
\begin{equation} \label{eq:two-sided}
(1-\eps) \sigmapw
\leq w
\leq (1+\eps) \sigmapw.
\end{equation}
It was shown in \cite[Lemma 14]{fazel22} that a two-sided $\eps$-approximation is also a $O(\eps p^2 \sqrt{d})$-estimate of the true Lewis weights.
An even weaker \emph{one-sided $\eps$-approximation} satisfies
\begin{equation} \label{eq:one-sided}
(1-\eps)\sigmapw \leq w
\quad \text{ and } \quad
\|w\|_1 \leq (1+\eps)d,
\end{equation}
and, as is argued in \cite{talagrand1990embedding,cohen2015lp}, one-sided approximation suffices for $\ell_p$-embedding and -regression problems, even when $\|w\|_1 = O(d)$.

Correspondingly, a simple algorithm to obtain such a one-sided approximation was proposed by Lee~\cite{lee16}; it allows one to construct a one-sided $\eps$-approximation by computing $\tO(1/\eps)$ many $O(\eps)$-approximate leverage scores.\footnote{In \label{footnote} \cite[Theorem~5.3.4]{lee16}, it is stated that the output of the algorithm even achieves a two-sided approximation, but the proof only establishes a one-sided approximation (and this suffices for the applications in \cite{lee16}).} 
Their algorithm is related to the aforementioned algorithm by Cohen and Peng \cite{cohen2015lp},
which for constant $0 < p < 4$ computes a stronger $\eps$-estimate of the $\ell_p$-Lewis weights using $\tO(1)$ many $O(\eps)$-approximate leverage score computations.
Crucially however, their algorithm is limited to $p < 4$, whereas many applications require larger (even non-constant) $p$.
More recently, Fazel, Lee, Padmanabhan and Sidford \cite{fazel22} provided an algorithm for computing an $\eps$-estimate of the $\ell_p$-Lewis weights for general~$p$, and to high precision, using $\tO(1)$ many \emph{exact} leverage score computations.
A naive analysis of this last algorithm suggests that approximate leverage scores suffice, albeit with a precision that is $\poly(\eps,1/d,1/n)$.
We generally have that $n \gg d$, and such inverse polynomial dependence on $n$ can be particularly prohibitive, ruling out the use of fast algorithms for approximating leverage scores. 

In this note we focus on the $p \geq 2$ case and address the gap between the one-sided approximation in~\cite{lee16}, requiring low-precision leverage scores, and the multiplicative Lewis weight estimates in~\cite{fazel22}, requiring precision inversely polynomial in $n$.
We show that it is possible to achieve two-sided and even multiplicative Lewis weight approximations by making $\poly(d,1/\eps)$ many $\poly(1/d,\eps)$-approximate leverage score computations, as is implied by the following theorem. 

\begin{theorem}[Low-precision Lewis weights] \label{thm:intro}
Let $\ma \in \R^{n \times d}$, $1 > \eps > 0$ and $p \geq 2$.
Using $\tO(p d/\eps)$ many $O(\eps/(pd))$-approximate leverage score computations it is possible to compute a two-sided $\eps$-approximation \eqref{eq:two-sided} of the $\ell_p$-Lewis weights of $\ma$.
\end{theorem}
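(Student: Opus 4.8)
The plan is to run Lee's one-sided approximation algorithm \cite{lee16} once as a warm start and then sharpen its output by a short first-order ascent on a concave reformulation of the Lewis-weight problem. Throughout write $\mw\defeq\diag(w)$, $\beta\defeq1-2/p\in[0,1)$, and let $\Phi(w)\defeq\sigmapw$, so that $\Phi_i(w)=\sigmaipw=w_i^{\beta}a_i^\top(\ma^\top\mw^{\beta}\ma)^+a_i$. Three elementary facts drive the argument. (a) $\Phi$ is scale invariant, $\Phi(cw)=\Phi(w)$, with $\Phi_i(w)\in[0,1]$ and $\sum_i\Phi_i(w)=\tr\big((\ma^\top\mw^{\beta}\ma)^+\ma^\top\mw^{\beta}\ma\big)=\mathrm{rank}(\ma)$ for all $w>\vzero$; after restricting to the column space of $\ma$ we may assume $\mathrm{rank}(\ma)=d$, so $\sum_i\Phi_i(w)=d$. (b) Since $p\ge2$, the map $w\mapsto\ma^\top\mw^{\beta}\ma$ is concave for the Loewner order (because $t\mapsto t^{\beta}$ is concave on $\R_{\ge0}$), hence $F(w)\defeq\logdet(\ma^\top\mw^{\beta}\ma)$ is concave on $\R^n_{>0}$, with gradient $\nabla_i F(w)=\beta\,\Phi_i(w)/w_i$; the $\ell_p$-Lewis weights are exactly the (interior) maximizers of $F$ over $\{w\ge\vzero:\sum_i w_i=d\}$, where $\nabla_i F\equiv\beta$, i.e. $w=\Phi(w)$. (c) A two-sided $\eps$-approximation \eqref{eq:two-sided} is precisely a $w>\vzero$ with $\Phi_i(w)/w_i\in[\tfrac1{1+\eps},\tfrac1{1-\eps}]$ for all $i$, which — after the harmless rescaling to $\sum_i w_i=d$ allowed by scale invariance of $\Phi$ — is a feasible point whose gradient is nearly uniform across coordinates.

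First I would invoke Lee's algorithm with accuracy parameter $\eps_1=\Theta(\eps/(pd))$; it produces, using $\tO(1/\eps_1)=\tO(pd/\eps)$ many $O(\eps_1)$-approximate leverage score computations, a one-sided $\eps_1$-approximation \eqref{eq:one-sided}: $(1-\eps_1)\Phi(w^{(0)})\le w^{(0)}$ and $\|w^{(0)}\|_1\le(1+\eps_1)d$. Rescaling $w^{(0)}$ to have $\sum_i w^{(0)}_i=d$ (which preserves $\Phi$) yields $w^{(0)}_i\ge(1-O(\eps_1))\Phi_i(w^{(0)})$ for all $i$ and, using $\sum_i\Phi_i(w^{(0)})=d$, $\sum_i\big(w^{(0)}_i-\Phi_i(w^{(0)})\big)=0$; thus each coordinate's deficit $\big(\Phi_i(w^{(0)})-w^{(0)}_i\big)_+$ is at most $O(\eps_1)\Phi_i(w^{(0)})$, the total excess $\sum_i\big(w^{(0)}_i-\Phi_i(w^{(0)})\big)_+$ is $O(\eps_1 d)$, and $F(w^{(0)})\ge F^\star-O(\eps_1 d)$ by concavity (bound the first-order term using $\max_i\nabla_i F(w^{(0)})\le\beta/(1-O(\eps_1))$ and $\sum_i w^\star_i=\sum_i w^{(0)}_i=d$). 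What is still missing for a two-sided guarantee is only the complementary inequality $w\le(1+\eps)\Phi(w)$, i.e. $\nabla_i F(w)\ge\beta(1-\eps)$ for every $i$, which for $w^{(0)}$ can fail only on coordinates of very small Lewis weight $\Phi_i(w^{(0)})=O(\eps_1 d/\eps)$.

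To remove such oversized coordinates I would post-process by a few steps of the damped multiplicative (Khachiyan / coordinate-ascent) update $w^{(k+1)}_i\propto(w^{(k)}_i)^{1-\theta}\,\Phi_i(w^{(k)})^{\theta}$, renormalized so $\sum_i w^{(k+1)}_i=d$, with a small step size $\theta=\Theta(1/\poly(p))$; this is exactly damped fixed-point iteration for $w=\Phi(w)$, equivalently first-order ascent on $F$ over the simplex, it only shrinks the over-weighted coordinates (those with $\Phi_i(w)<w_i$) relative to the rest, and each step costs one $O(\eps/(pd))$-approximate leverage score computation. Because $F$ is concave for $p\ge2$, a volumetric-potential / Frank--Wolfe-gap argument (as in the analysis of Khachiyan's algorithm for the minimum-volume enclosing ellipsoid, the $p=\infty$ case) shows that each step increases $F$ by a definite amount as long as $\max_i\nabla_i F(w^{(k)})$ is not yet within $\eps$ of $\beta$; together with $F^\star-F(w^{(0)})=O(\eps_1 d)$ this bounds the iteration count by the classical $\tO(d/\eps)$, with an extra factor $p$ entering through $\beta$ and the step size, i.e. $\tO(pd/\eps)$ steps in all. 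One then checks that the complementary lower bound $\min_i\nabla_i F\ge\beta(1-\eps)$ is maintained — it holds at $w^{(0)}$ with slack, and the update moves mass toward, not away from, the under-weighted coordinates — and that replacing exact leverage scores by $O(\eps/(pd))$-approximate ones perturbs $\Phi$, $\nabla F$, and $F$ only multiplicatively by $O(\eps/(pd))$, an error the potential argument (which carries $\Omega(\eps/d)$ slack per step) absorbs. Rescaling the final iterate to $\ell_1$-norm $d$ and adjusting constants then gives the two-sided $\eps$-approximation, with total cost $\tO(pd/\eps)$ many $O(\eps/(pd))$-approximate leverage score computations, of the same order as the warm-start phase.

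The routine ingredients are facts (a)--(c), the reduction to full rank, and the approximate-leverage-score bookkeeping. The main obstacle is the post-processing rate analysis: establishing a Khachiyan-type potential/gap bound that is valid uniformly in $p\ge2$ and, crucially, certifies a \emph{coordinatewise} two-sided bound rather than merely a small objective gap. The delicate point is exactly the coordinates with tiny Lewis weight, where $F$ is non-smooth (its gradient diverges as $w_i\to0$); here one must exploit the monotonicity of $w_i\mapsto\Phi_i(w)$ together with the fact that the warm start already keeps each $w^{(0)}_i$ within a bounded factor of $\Phi_i(w^{(0)})$, so that the iteration stays in a well-behaved region and terminates within the claimed number of steps.
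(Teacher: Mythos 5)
Your plan diverges from the paper's at the post-processing step: the paper applies a \emph{single} undamped fixed-point step $v=\iter_p(w)$ with exponent $p/2$, analyzed directly via a Loewner-order comparison of $\ma^\top\mw^{1-\frac2p}\ma$ and $\ma^\top\mv^{1-\frac2p}\ma$ (\cref{lem:mult_diff,lem:distances}), whereas you propose $\tO(pd/\eps)$ damped multiplicative steps controlled by a Khachiyan/Frank--Wolfe potential. That route has a concrete gap, which you flag yourself as ``the main obstacle'' but do not close. For maximizing $F(w)=\logdet(\ma^\top\mw^{1-\frac2p}\ma)$ over $\{w\geq\vzero:\sum_i w_i=d\}$, the Frank--Wolfe gap at $w$ equals $\beta d\big(\max_i\sigmaipw/w_i-1\big)$ and the objective gap $F^\star-F(w)$ is bounded by it; both therefore control only $\max_i\sigmaipw/w_i$, which is precisely the quantity the one-sided warm start \emph{already} makes $\leq 1/(1-O(\eps_1))$. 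Your termination criterion ``$\max_i\nabla_iF$ within $\eps$ of $\beta$'' is thus met immediately at $w^{(0)}$, the potential argument is not forced to progress, and the quantity you actually need --- $\min_i\sigmaipw/w_i\geq1-\eps$, i.e.\ the upper bound $w\leq(1+\eps)\sigmapw$ --- is left uncertified: an $O(\eps_1 d)$ objective gap is perfectly compatible with $\sigmaipw/w_i\approx\tfrac12$ on a coordinate with $\sigmaipw=O(\eps_1 d)$. Your sketch is also internally inconsistent on exactly this point: you first observe (correctly) that $\min_i\nabla_iF\geq\beta(1-\eps)$ ``can fail'' at $w^{(0)}$ on tiny-Lewis-weight coordinates, but later assert it ``holds at $w^{(0)}$ with slack.''

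The paper's resolution is that the right step size is the aggressive $\theta=p/2$, not a small damped $\theta=\Theta(1/\poly(p))$. One such step makes $v_i/\sigmaipv$ equal, after an algebraic cancellation, to the ratio $\big[\ma(\ma^\top\mv^{1-\frac2p}\ma)^+\ma^\top\big]_{ii}\big/\big[\ma(\ma^\top\mw^{1-\frac2p}\ma)^+\ma^\top\big]_{ii}$, and \cref{lem:mult_diff} bounds this ratio by $1\pm\norm{\iter_p(w)-\sigmapw}_1$, which \cref{lem:distances} shows is $O(\eps pd)$ under the one-sided hypothesis. Salvaging your iterative route would require a certificate beyond the Frank--Wolfe gap --- e.g.\ explicit per-coordinate tracking or a drop-coordinate mechanism as in modern MVEE solvers --- and would be substantially more involved than the paper's quadratic-form argument.
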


Concretely, we show that running one ``fixed point iteration'' (as used e.g.~in the algorithm in~\cite{cohen2015lp}) can turn a one-sided $\eps$-approximation into a two-sided $O(p d \eps)$-approximation of the Lewis weights.
By using Lee's algorithm to obtain the one-sided approximation, we obtain the stated theorem.
By \cite[Lemma 14]{fazel22} this effectively also yields a $O(p^3 d^{3/2} \eps)$-estimate of the true Lewis weights. 

Recently, two of the authors provided an improved quantum algorithm for computing approximate leverage scores, see~\cite{apers2023quantum}. This led to quantum speedups for linear programming by using it as a subroutine in a dual-only interior point method based on either the volumetric or Lewis weight barrier. For the latter, multiplicative estimates of $w_p(\ma)$ are needed; these are obtained through a quantum implementation of the algorithm presented in this note.

\section{Preliminaries}

Throughout, we use plain italic for vectors $v$ and boldface capitals $\ma$ for matrices.
If there can be no confusion, we use a boldface capital to denote the diagonal matrix corresponding to its lowercase counterpart, e.g., $\mw = \Diag(w)$. We use $\vzero$ to denote the all-zero vector of appropriate size.
For vectors $u,v \in \R^n$ we write $u \leq v$ as shorthand for the entrywise inequality $u_i \leq v_i$ for all $i \in [n]$. We use ``$\preceq$'' and ''$\succeq$'' to denote the Loewner order on symmetric matrices, that is, we write $A \succeq B$ or $B \preceq A$ when $A-B$ is positive semidefinite. 

We formally define the following three notions of approximate Lewis weights.

\vspace{2mm}
\begin{definition}[Lewis weight approximations]
Let $\ma \in \R^{n \times d}$, $w \in \R^n_{> 0}$, $1 > \eps > 0$ and $p \geq 2$.
\begin{itemize}
\item
$w$ is a one-sided $\epsilon$-approximate $\ell_p$-Lewis weight of $\ma$ if $\sigmapw \leq (1+\eps) w$ and $\| w_i \|_1 \leq (1+\eps)d$. 
\item 
$w$ is a two-sided $\epsilon$-approximate $\ell_p$-Lewis weight of $\ma$ if $(1-\eps)  \sigmapw \leq w \leq (1+\eps) \sigmapw$.
\item
$w$ is an \emph{$\eps$-estimate} of $w_p(\ma)$ if $(1-\eps)  w_p(\ma) \leq w \leq (1+\eps) w_p(\ma)$.
\end{itemize}
\end{definition}
Note that when $p=2$, the case of leverage scores, the last two notions of approximation coincide.

Next, we prove two simple technical lemmas that we use.
\begin{lemma} \label{lem:abs_diff_2}
    Suppose $x,y \in \R^n_{\geq 0}$ and $\delta>0$ are such that $y \leq (1+\delta) x$ entrywise and $\|x\|_1 \leq (1+\delta)\|y\|_1$. Then $\|x-y\|_1 \leq 3 \delta \|y\|_1$.
\end{lemma}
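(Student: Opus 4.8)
The plan is to split the $\ell_1$ difference $\|x-y\|_1$ into the coordinates where $x_i \geq y_i$ and those where $x_i < y_i$, and to bound each part separately using one of the two hypotheses. First I would write
\[
\|x-y\|_1 = \sum_{i : x_i \geq y_i} (x_i - y_i) + \sum_{i : x_i < y_i} (y_i - x_i) =: S_+ + S_-.
\]
For the second sum $S_-$, the entrywise bound $y \leq (1+\delta)x$ gives $y_i - x_i \leq \delta x_i \leq \delta y_i$ on those coordinates (using $y_i > x_i \geq 0$ there, so $x_i \leq y_i$), hence $S_- \leq \delta \|y\|_1$. For $S_+$ I would observe that $\sum_i (x_i - y_i) = \|x\|_1 - \|y\|_1 \leq \delta \|y\|_1$ by the second hypothesis, and since $\sum_i (x_i - y_i) = S_+ - S_-$, we get $S_+ = \sum_i(x_i-y_i) + S_- \leq \delta\|y\|_1 + S_- \leq 2\delta\|y\|_1$.

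Adding the two estimates yields $\|x-y\|_1 = S_+ + S_- \leq 2\delta\|y\|_1 + \delta\|y\|_1 = 3\delta\|y\|_1$, as claimed. The only mild subtlety is making sure the nonnegativity hypotheses are used correctly: on the set $\{x_i < y_i\}$ we need $x_i \leq y_i$ to pass from $\delta x_i$ to $\delta y_i$, which holds since both are nonnegative; and $\|x\|_1 = \sum_i x_i$, $\|y\|_1 = \sum_i y_i$ because $x,y \geq 0$. I do not anticipate a genuine obstacle here — the argument is a routine two-sided $\ell_1$ decomposition — the only thing to be careful about is not double-counting $S_-$ when reconstructing $S_+$ from the signed sum.
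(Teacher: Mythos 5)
Your proof is correct. The paper's argument reaches the same $3\delta$ bound via a slightly different decomposition: it writes $x - y = \bigl(x - \tfrac{1}{1+\delta}y\bigr) - \tfrac{\delta}{1+\delta}y$, notes that the entrywise hypothesis $y \leq (1+\delta)x$ makes the first bracket nonnegative in every coordinate (so its $\ell_1$ norm is just $\|x\|_1 - \tfrac{1}{1+\delta}\|y\|_1$), applies the triangle inequality, and then uses $\|x\|_1 \leq (1+\delta)\|y\|_1$ to close. You instead partition the index set by the sign of $x_i - y_i$ and bound the two pieces $S_+$ and $S_-$ separately, using the entrywise hypothesis to control $S_-$ and the $\ell_1$ hypothesis (together with $S_+ - S_- = \|x\|_1 - \|y\|_1$) to control $S_+$. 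Both routes exploit the one-sided entrywise bound to eliminate absolute values and then fall back on the $\ell_1$-mass comparison; the paper's algebraic shift is more compact, while your sign-based split is more transparent about where each hypothesis is used. Either is a perfectly good proof of the lemma.
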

\begin{proof}
The proof follows from writing $x-y = (x-\frac{1}{1+\delta} y) - \frac{\delta}{1+\delta} y$ and applying the triangle inequality:
\begin{align*}
\norm{x-y}_{1} &\leq\sum_{i\in[n]}\left|x_{i}-\frac{1}{1+\delta}y_{i}\right|+ \frac{\delta}{1+\delta} \sum_{i\in[n]}|y_{i}| =\sum_{i\in[n]}x_{i}-\frac{1}{1+\delta}y_{i}+ \frac{\delta}{1+\delta} \sum_{i\in[n]}y_{i} \\
&= \|x\|_1 - \frac{1}{1+\delta} \|y\|_1 + \frac{\delta}{1+\delta} \|y\|_1. 
\end{align*}
Finally, using $\|x\|_1 \leq (1+\delta) \|y\|_1$ we obtain $\norm{x-y}_{1} \leq \left(\frac{(1+\delta)^2 - 1+\delta}{1+\delta}\right) \|y\|_1 \leq 3 \delta \|y\|_1$.
\end{proof}

\begin{lemma}\label{lemma:1mxc_upper}
$|1 - x^c| \leq c \cdot \max\{1,x\}^{c - 1} \cdot | 1 - x|$ for all $x \geq 0$ and $c \geq 1$. 
\end{lemma}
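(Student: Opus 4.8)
The plan is to prove the inequality $|1-x^c|\le c\cdot\max\{1,x\}^{c-1}\cdot|1-x|$ by reducing it to a statement about a derivative (mean value theorem), handling the two cases $x\ge 1$ and $0\le x<1$ separately since the factor $\max\{1,x\}$ behaves differently in each.

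First I would dispose of the trivial case $x=1$, where both sides vanish. For $x\neq 1$, write $x^c-1=\int_1^x c t^{c-1}\,dt$, or equivalently apply the mean value theorem to $f(t)=t^c$ on the interval with endpoints $1$ and $x$: there is a $\xi$ strictly between $1$ and $x$ with $x^c-1=c\,\xi^{c-1}(x-1)$, hence $|1-x^c|=c\,\xi^{c-1}|1-x|$. Now I bound $\xi^{c-1}$. If $x>1$, then $1<\xi<x$ and since $c-1\ge 0$ we get $\xi^{c-1}\le x^{c-1}=\max\{1,x\}^{c-1}$. If $0\le x<1$, then $x<\xi<1$, so $\xi^{c-1}\le 1=\max\{1,x\}^{c-1}$ (again using $c-1\ge 0$, so $t\mapsto t^{c-1}$ is nondecreasing on $[0,\infty)$). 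In either case $\xi^{c-1}\le\max\{1,x\}^{c-1}$, which gives the claim. The edge case $x=0$ is covered by the $0\le x<1$ analysis (or checked directly: $|1-0|=1\le c\cdot 1\cdot 1$).

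Alternatively, to avoid invoking the mean value theorem one can argue purely by monotonicity: for $x\ge 1$, $x^c-1=\int_1^x ct^{c-1}\,dt\le\int_1^x cx^{c-1}\,dt=cx^{c-1}(x-1)$; for $0\le x\le 1$, $1-x^c=\int_x^1 ct^{c-1}\,dt\le\int_x^1 c\cdot 1^{c-1}\,dt=c(1-x)$. This is essentially the same computation made explicit.

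I do not anticipate a real obstacle here; the only thing to be careful about is getting the direction of the bound on $\xi^{c-1}$ right in each of the two cases, which hinges on the single fact that $t\mapsto t^{c-1}$ is nondecreasing on $[0,\infty)$ because $c\ge 1$. The hypothesis $c\ge 1$ is used exactly there (and nowhere else), and $x\ge 0$ is needed so that $x^{c-1}$ and $x^c$ are well defined.
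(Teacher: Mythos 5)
Your proof is correct and the alternative argument you give at the end (bounding the integrand $ct^{c-1}$ by $c\cdot\max\{1,x\}^{c-1}$ over the interval with endpoints $1$ and $x$) is exactly the paper's proof. The mean value theorem version is an equivalent phrasing of the same idea.
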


\begin{proof} As the claim is trivial when $c = 1$, we assume $c > 1$. In this case,
\[
|1 - x^c| = \left| \int_1^{x} c \cdot  \alpha^{c - 1} d \alpha \right|
\leq 
\left| \int_1^{x} c\cdot  \max\{1,x\}^{c - 1} d\alpha \right|
= c \cdot \max\{1,x\}^{c - 1} \cdot | 1 - x|\,. \qedhere
\]
\end{proof}

\section{Main result} \label{sec:main}

As our main result, we will prove that if $w$ is a one-sided $\epsilon$-approximate $\ell_p$-Lewis weight of $\ma$, then the vector $\iter_p(w) \in \R^{n}$ defined coordinate-wise by 
\begin{equation} \label{eq:fp-iteration}
\iter_p(w)_i \defeq  w_i \cdot (\sigmaipw/w_{i})^{\frac{p}{2}}
= w_i^{1-\frac{p}{2}} \sigmaipw^{\frac{p}{2}} \qquad \text{ for all } i \in [n]
\end{equation}
is a two-sided $O(\eps p d)$-approximation of the $\ell_p$-Lewis weights of $\ma$.
This transformation effectively corresponds to running one fixed point iteration of the Lewis weight algorithm used in e.g., \cite{cohen2015lp}. 

We first prove a key lemma that allows us to compare quadratic forms associated with $\iter_p(w)$ and~$w$.

\begin{lemma}
\label{lem:mult_diff}
For any $w\in\R_{>0}^{n}$ and 
$\alpha = \big\|\iter_p(w) - \sigmapw \big\|_1$, we have
\[
(1-\alpha)\ma^{\top}\mw^{1-\frac{2}{p}}\ma
\preceq \ma^{\top}\Diag(\iter_p(w))^{1-\frac{2}{p}}\ma
\preceq (1+\alpha)\ma^{\top}\mw^{1-\frac{2}{p}}\ma
\,.
\]
\end{lemma}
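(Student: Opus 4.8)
The plan is to reduce the Loewner comparison to a statement about the ``weight ratios''
$r_i := \iter_p(w)_i / w_i$ raised to the power $1 - 2/p$, and then bound how far the diagonal
matrix $\Diag(r)^{1-2/p}$ can be from the identity. Concretely, write
$\ma^\top \Diag(\iter_p(w))^{1-2/p}\ma = \ma^\top \mw^{(1-2/p)/2}\,\Diag(r)^{1-2/p}\,\mw^{(1-2/p)/2}\ma$,
where $r_i = \iter_p(w)_i/w_i$. Since conjugation by $\mw^{(1-2/p)/2}$ preserves the Loewner order,
it suffices to show
\[
(1-\alpha)\,I \preceq \Diag(r)^{1-2/p} \preceq (1+\alpha)\,I
\quad\text{on the range of } \mw^{(1-2/p)/2}\ma,
\]
and in fact it suffices to prove the stronger entrywise bound
$(1-\alpha) \le r_i^{1-2/p} \le 1+\alpha$ for every $i\in[n]$, which immediately gives the matrix inequality by sandwiching $\Diag(r)^{1-2/p}$ between $(1\pm\alpha)I$.

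The key computation is to identify $r_i^{1-2/p}$ explicitly. From \eqref{eq:fp-iteration} we have
$\iter_p(w)_i = w_i\,(\sigmaipw/w_i)^{p/2}$, so $r_i = \iter_p(w)_i/w_i = (\sigmaipw/w_i)^{p/2}$, and therefore
\[
r_i^{1-2/p} = (\sigmaipw/w_i)^{(p/2)(1-2/p)} = (\sigmaipw/w_i)^{\frac p2 - 1} = \frac{\iter_p(w)_i}{\sigmaipw}\cdot\frac{\sigmaipw}{w_i}\cdot\Big(\tfrac{\sigmaipw}{w_i}\Big)^{-1}
\]
— more simply, $r_i^{1-2/p} = (\sigmaipw/w_i)^{p/2 - 1}$, and note $\iter_p(w)_i = \sigmaipw \cdot (\sigmaipw/w_i)^{p/2-1}$, so in fact
$r_i^{1-2/p} = \iter_p(w)_i/\sigmaipw$. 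Thus the quantity we must control is exactly the ratio $\iter_p(w)_i / \sigmaipw$, and we must show it lies in $[1-\alpha, 1+\alpha]$.

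That last step is where the hypothesis $\alpha = \|\iter_p(w) - \sigmapw\|_1$ enters, and it is the crux of the argument. Since $\sigmaipw$ is a leverage score of a matrix whose rows are rescalings of those of $\ma$, we have $\sigmaipw = a_i^\top \mw^{1/2-1/p}(\ma^\top \mw^{1-2/p}\ma)^+ \mw^{1/2-1/p} a_i \le 1$ (leverage scores are at most one), i.e. $\sigmaipw \le 1$. [If $\iter_p(w)_i = 0$ or $\sigmaipw = 0$ for some $i$, handle it separately — but $w>0$ and we may assume $\ma$ has full column rank, or restrict to the support, so these are positive.] Using $\sigmaipw \le 1$ together with $|\iter_p(w)_i - \sigmaipw| \le \|\iter_p(w) - \sigmapw\|_1 = \alpha$, we get
\[
\Big|\frac{\iter_p(w)_i}{\sigmaipw} - 1\Big| = \frac{|\iter_p(w)_i - \sigmaipw|}{\sigmaipw} \ge |\iter_p(w)_i - \sigmaipw|,
\]
which is the \emph{wrong} direction — so the main obstacle is precisely that dividing by the possibly-small $\sigmaipw$ could blow up the relative error. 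The resolution I expect is that one does \emph{not} need the pointwise ratio bound at all: instead, bound the matrix difference directly by
\[
\big\| \ma^\top \Diag(\iter_p(w))^{1-2/p}\ma - \ma^\top \mw^{1-2/p}\ma \big\|
\]
via $\ma^\top \Diag(\iter_p(w) - \sigmapw^{\,\cdot})\,\ma$-type telescoping — but the powers $1-2/p$ complicate this. The clean route is: use $r_i^{1-2/p} = \iter_p(w)_i/\sigmaipw$ to write
$\ma^\top \Diag(\iter_p(w))^{1-2/p}\ma = \mb^\top \Diag(\iter_p(w)/\sigmapw)\,\mb$ where $\mb = \mw^{1/2-1/p}\ma$ (so that $\mb^\top\mb = \ma^\top\mw^{1-2/p}\ma$ and $\sigmaipw = b_i^\top(\mb^\top\mb)^+ b_i$). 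Then
\[
\mb^\top \Diag(\iter_p(w)/\sigmapw)\,\mb - \mb^\top\mb = \sum_i \Big(\tfrac{\iter_p(w)_i}{\sigmaipw} - 1\Big) b_i b_i^\top = \sum_i \tfrac{\iter_p(w)_i - \sigmaipw}{\sigmaipw}\, b_i b_i^\top,
\]
and now for any unit vector $u$ in the range, $u^\top b_i b_i^\top u / \sigmaipw = (b_i^\top u)^2/\sigmaipw$, and crucially $\sum_i (b_i^\top u)^2 / \sigmaipw$ — no. The genuinely correct bound uses that $b_i b_i^\top \preceq \sigmaipw \cdot \mb^\top\mb$ (the defining property of leverage scores: $b_i b_i^\top \preceq \sigma_i \, \mb^\top\mb$ holds because $\sigma_i = b_i^\top(\mb^\top\mb)^+b_i$). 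Hence
\[
\Big\| \sum_i \tfrac{\iter_p(w)_i - \sigmaipw}{\sigmaipw}\, b_i b_i^\top \Big\| \preceq \sum_i \frac{|\iter_p(w)_i - \sigmaipw|}{\sigmaipw}\cdot \sigmaipw\, \mb^\top\mb = \Big(\sum_i |\iter_p(w)_i - \sigmaipw|\Big) \mb^\top\mb = \alpha\, \mb^\top\mb,
\]
where the Loewner bound $\pm\sum_i c_i b_i b_i^\top \preceq \sum_i |c_i| b_i b_i^\top \preceq \sum_i |c_i|\sigmaipw \mb^\top\mb$ uses $c_i = (\iter_p(w)_i - \sigmaipw)/\sigmaipw$ so that $|c_i|\sigmaipw = |\iter_p(w)_i - \sigmaipw|$. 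This yields exactly $-\alpha\,\mb^\top\mb \preceq \mb^\top\Diag(\iter_p(w)/\sigmapw)\mb - \mb^\top\mb \preceq \alpha\,\mb^\top\mb$, i.e. the claimed inequality, since $\mb^\top\mb = \ma^\top\mw^{1-2/p}\ma$ and $\mb^\top\Diag(\iter_p(w)/\sigmapw)\mb = \ma^\top\Diag(\iter_p(w))^{1-2/p}\ma$. The main obstacle, then, is not an inequality but a bookkeeping one: correctly verifying the identity $r_i^{1-2/p} = \iter_p(w)_i/\sigmaipw$ and assembling the $b_ib_i^\top \preceq \sigmaipw\,\mb^\top\mb$ leverage-score fact to absorb the dangerous $1/\sigmaipw$ factor.
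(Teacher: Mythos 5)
Your final argument is correct and lands on essentially the same proof as the paper: in both cases the crux is that the reweighted difference $\sum_i |[\iter_p(w)]_i^{1-2/p} - w_i^{1-2/p}|\cdot \sigmaipw / w_i^{1-2/p}$ telescopes exactly to $\|\iter_p(w) - \sigmapw\|_1 = \alpha$, and the matrix inequality then follows by a leverage-score argument. The paper decomposes $\Delta := \iter_p(w)^{1-2/p} - w^{1-2/p}$ into its positive and negative parts and bounds each $\|\mh^{-1/2}(\ma^\top\Delta_\pm\ma)\mh^{-1/2}\|_2$ by its trace (valid since the normalized matrices are PSD); you instead invoke the rank-one Loewner fact $b_ib_i^\top \preceq \sigmaipw\, \mb^\top\mb$ with $\mb = \mw^{1/2-1/p}\ma$ and sum — the two mechanisms are equivalent, and your route is arguably a touch more direct. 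The only blemish is expository: the initial claim that an entrywise bound on $r_i^{1-2/p}$ would ``suffice'' (and the side remark $\sigmaipw \leq 1$) belongs to an abandoned attempt and could be cut; what remains is a clean, correct proof.
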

\begin{proof}
Let $\Delta\defeq \iter_p(w)^{1-\frac{2}{p}}-w^{1-\frac{2}{p}}$ and let $\Delta_{+}\defeq\max\{\Delta,\vzero\}$
and $\Delta_{-}\defeq\max\{-\Delta,\vzero\}$ entrywise so that $\Delta_{+},\Delta_{-}\in\R_{\geq0}^{n}$,
$\Delta=\Delta_{+}-\Delta_{-}$, and $\norm{\Delta}_{1}=\norm{\Delta_{+}}_{1}+\norm{\Delta_{-}}_{1}$.
Letting $\mh\defeq\ma^{\top}\mw^{1-\frac{2}{p}}\ma$ we have that
\begin{align*}
\normFull{\mh^{-1/2}\left(\ma^{\top}\Delta\ma\right)\mh^{-1/2}}_{2} & \leq\normFull{\mh^{-1/2}\left(\ma^{\top}\Delta_{+}\ma\right)\mh^{-1/2}}_{2}+\normFull{\mh^{-1/2}\left(\ma^{\top}\Delta_{-}\ma\right)\mh^{-1/2}}_{2}\,.
\end{align*}
We then use the fact $\Delta_+$ is positive semidefinite, and therefore so is $\mh^{-1/2}\left(\ma^{\top}\Delta_{+}\ma\right)\mh^{-1/2}$, to upper bound the spectral norm by the trace: 
\begin{align*}
\normFull{\mh^{-1/2}\left(\ma^{\top}\Delta_{+}\ma\right)\mh^{-1/2}}_{2} & \leq\tr\left[\mh^{-1/2}\left(\ma^{\top}\Delta_{+}\ma\right)\mh^{-1/2}\right]=\sum_{i\in[n]}\left[\Delta_{+}\right]_{i}\left[\ma(\ma^{\top}\mw^{1-\frac{2}{p}}\ma)^{-1}\ma^{\top}\right]_{ii}\\
 & =\sum_{i\in[n]}\left[\Delta_{+}\right]_{i}\cdot\frac{\sigmaipw}{w_{i}^{1-\frac{2}{p}}}\,.
\end{align*}
By symmetry the same bound holds for $\Delta_{-}$.
Combining the two bounds yields that 
\[
\normFull{\mh^{-1/2}\left(\ma^{\top}\Delta\ma\right)\mh^{-1/2}}_{2}\leq\sum_{i\in[n]}\left|\Delta_{i}\right|\cdot\frac{\sigmaipw}{w_{i}^{1-\frac{2}{p}}}
\]
Recalling that $\Delta = \iter_p(w)^{1-\frac2p} - w^{1-\frac2p}$ we obtain the desired upper bound:
\[
\sum_{i\in[n]}\left|\Delta_{i}\right|\cdot\frac{\sigmaipw}{w_{i}^{1-\frac{2}{p}}}
=
\big\| \iter_p(w) - \sigmapw 
 \big\|_1\,. \qedhere
\]
\end{proof}

Next, we show how to bound the approximation factor $\alpha = \big\|\iter_p(w) - \sigmapw \big\|_1$ when $w$ is a one-sided approximation of the Lewis weights of $\ma$.

\begin{lemma}\label{lem:distances} 
If $w \in \R^n_{> 0}$ is a one-sided $\epsilon$-approximate $\ell_p$-Lewis weight of $\ma$ for  $p \geq 2$, then 
\[
\norm{\iter_p(w) - \sigmaipw}_1 \leq 3\epsilon (\frac{p}{2} - 1) (1 + \epsilon)^{\frac{p}{2}-1} d.
\]
\end{lemma}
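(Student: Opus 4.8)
The plan is to bound $\alpha = \|\iter_p(w) - \sigmapw\|_1$ directly by writing out the $i$-th term and relating it to $|1 - \sigmaipw / w_i|$. From the definition \eqref{eq:fp-iteration}, $\iter_p(w)_i = w_i \cdot (\sigmaipw/w_i)^{p/2}$, so the summand is
\[
\left| w_i (\sigmaipw/w_i)^{p/2} - \sigmaipw \right| = \sigmaipw \cdot \left| (\sigmaipw/w_i)^{p/2-1} - 1 \right|.
\]
Here I would set $x_i := \sigmaipw / w_i$ and $c := p/2 \geq 1$, and note that $c - 1 = p/2 - 1 \geq 0$. Applying \cref{lemma:1mxc_upper} with exponent $c-1$ (which is $\geq 1$ only when $p \geq 4$ — so I'd need the case $1 \le c-1$; for $p \in [2,4)$ the exponent $c-1 \in [0,1)$ is even easier since $|1-x^{c-1}| \le |1-x|$ directly, or one can just invoke the lemma for $c-1 \ge 1$ and handle small exponents separately), I get $|x_i^{c-1} - 1| \leq (c-1)\max\{1,x_i\}^{c-2}|1-x_i|$. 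The one-sided hypothesis gives $\sigmaipw \leq (1+\eps) w_i$, i.e. $x_i \leq 1+\eps$, so $\max\{1,x_i\} \leq 1+\eps$ and $\sigmaipw \le (1+\eps) w_i$; hence the summand is at most $(1+\eps) w_i \cdot (p/2-1)(1+\eps)^{p/2-2} \cdot w_i |1 - x_i| \cdot \tfrac{1}{w_i}$... let me be careful: $\sigmaipw \cdot (c-1)\max\{1,x_i\}^{c-2}|1-x_i| \le (1+\eps)w_i \cdot (p/2-1)(1+\eps)^{p/2-2}\cdot |1-x_i|$, and $w_i|1-x_i| = |w_i - \sigmaipw|$. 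So each term is bounded by $(p/2-1)(1+\eps)^{p/2-1} |w_i - \sigmaipw|$, and summing gives $\alpha \leq (p/2-1)(1+\eps)^{p/2-1} \|w - \sigmapw\|_1$.

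The remaining step is to bound $\|w - \sigmapw\|_1 \leq 3\eps d$ using \cref{lem:abs_diff_2}. I apply that lemma with $x := \sigmapw$, $y := w$, and $\delta := \eps$: the one-sided approximation gives $\sigmapw \leq (1+\eps) w$, which is exactly $x \leq (1+\delta) y$... wait, \cref{lem:abs_diff_2} wants $y \le (1+\delta)x$ and $\|x\|_1 \le (1+\delta)\|y\|_1$. So I should instead take $x := w$ and $y := \sigmapw$? Then $y \le (1+\delta)x$ reads $\sigmapw \le (1+\eps)w$ — good — and $\|x\|_1 \le (1+\delta)\|y\|_1$ reads $\|w\|_1 \le (1+\eps)\|\sigmapw\|_1$. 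But the one-sided definition only gives $\|w\|_1 \le (1+\eps)d$; I need a lower bound on $\|\sigmapw\|_1$. Since $\sigmapw \ge w/(1+\eps)$... no. The correct normalization: leverage scores of $\mw^{1/2-1/p}\ma$ sum to at most $d$, with equality when the matrix has full column rank, so $\|\sigmapw\|_1 \le d$; combined with $\sigmapw \le (1+\eps)w$ we get $\|w\|_1 \ge \|\sigmapw\|_1/(1+\eps)$, i.e. $\|\sigmapw\|_1 \le (1+\eps)\|w\|_1$. That gives the second hypothesis. Then \cref{lem:abs_diff_2} yields $\|w - \sigmapw\|_1 = \|x - y\|_1 \le 3\eps\|y\|_1 = 3\eps\|\sigmapw\|_1 \le 3\eps d$. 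Chaining the two bounds gives $\alpha \le 3\eps(p/2-1)(1+\eps)^{p/2-1} d$, as claimed.

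The main obstacle I anticipate is the bookkeeping around which variable plays the role of $x$ versus $y$ in \cref{lem:abs_diff_2} and verifying the auxiliary inequality $\|\sigmapw\|_1 \le (1+\eps)\|w\|_1$ — this needs the fact that the $\ell_2$-leverage scores of any matrix sum to at most $d$, which should either be cited or noted as standard. A secondary subtlety is the application of \cref{lemma:1mxc_upper}: the lemma is stated for exponent $c \ge 1$, but here the natural exponent is $c - 1 = p/2 - 1$, which lies in $[0,1)$ for $p \in [2,4)$; for that range the bound $|1 - x^{c-1}| \le |1-x|$ holds directly (by concavity of $t \mapsto t^{c-1}$), so I would split into $p \ge 4$ and $2 \le p < 4$, or simply remark that $(c-1)\max\{1,x\}^{c-2} \le 1 \le (c-1)(1+\eps)^{c-1}$... actually that inequality $(c-1) \le$ something fails when $c-1 < 1$. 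Cleanest is: for all $p \ge 2$, $|1 - x^{p/2-1}| \le (p/2-1)\max\{1,x\}^{p/2-1}|1-x|$ holds — this is a mild strengthening valid for all exponents $\ge 0$ when we bump the power on $\max\{1,x\}$ from $c-2$ up to $c-1 = p/2-1 \ge 0$ (legitimate since $\max\{1,x\} \ge 1$), and the coefficient $p/2-1$ bound follows from the integral estimate exactly as in \cref{lemma:1mxc_upper} once one observes $\alpha^{p/2-2} \le \max\{1,x\}^{p/2-2} \le \max\{1,x\}^{p/2-1}$ on the relevant interval. I would present it in the latter clean form.
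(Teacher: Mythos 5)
Your overall route is the one the paper takes: factor $\iter_p(w)_i = \sigmaipw\,(\sigmaipw/w_i)^{p/2-1}$, use \cref{lemma:1mxc_upper} with the one-sided bound $\sigmaipw \le (1+\eps)w_i$ to pull out a factor $(\tfrac{p}{2}-1)(1+\eps)^{p/2-1}$, and then invoke \cref{lem:abs_diff_2} to bound $\|w-\sigmapw\|_1$. The decomposition and both auxiliary lemmas are used in the same way, so this is not a genuinely different proof.

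There is, however, a concrete gap in your invocation of \cref{lem:abs_diff_2}. With $x:=w$ and $y:=\sigmapw$, the two hypotheses you must check are $\sigmapw \le (1+\eps)w$ (fine) and $\|w\|_1 \le (1+\eps)\|\sigmapw\|_1$. You instead derive $\|\sigmapw\|_1 \le (1+\eps)\|w\|_1$ from summing $\sigmapw \le (1+\eps)w$, and declare that ``gives the second hypothesis'' --- but this is the \emph{reverse} inequality, and it does not imply the one you need. The correct justification is direct: assuming $\ma$ has full column rank (as is standard for Lewis weights to be well-defined and as the paper implicitly uses), $\|\sigmapw\|_1 = d$, so the one-sided condition $\|w\|_1 \le (1+\eps)d$ \emph{is} exactly $\|w\|_1 \le (1+\eps)\|\sigmapw\|_1$. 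That fixes the hole.

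Your concern about \cref{lemma:1mxc_upper} for $p\in[2,4)$ (exponent $c-1 = p/2-1 < 1$) is a legitimate observation --- the lemma as stated requires $c\ge 1$, and the paper does not address this either. But your proposed repair is also off: the claimed strengthening $|1-x^{p/2-1}| \le (\tfrac{p}{2}-1)\max\{1,x\}^{p/2-1}|1-x|$ fails near $x=0$ whenever $\tfrac{p}{2}-1<1$ (the left side tends to $1$ while the right side tends to $\tfrac{p}{2}-1<1$), and the inequality $\alpha^{p/2-2}\le \max\{1,x\}^{p/2-2}$ you invoke is wrong on the interval $[x,1]$ when $p/2-2<0$. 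The clean way to handle $p\in[2,4)$ is to observe that $|1-x^a|\le|1-x|$ for $a\in[0,1]$ and $x\ge0$, which bounds the per-term factor by $(1+\eps)$ rather than $(\tfrac{p}{2}-1)(1+\eps)^{p/2-1}$; this changes the constant in the lemma's conclusion for that range of $p$, so strictly one should state the bound as $3\eps\max\{\tfrac{p}{2}-1,1\}(1+\eps)^{p/2-1}d$ or similar. Since this is also a (minor) blemish in the paper's own argument, I would not penalize you for it, but your specific fix does not work as written.
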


\begin{proof}
Note that $\iter_p(w)_i = \sigmaipw (\sigmaipw / w_i )^{\frac{p}{2}-1}$. Since $\iter_p(w)$ and $\sigmapw$ are non-negative, it follows that
\begin{align*}
\norm{\iter_p(w) - \sigmapw}_1
&= \sum_{i \in [n]} \sigmaipw \left| \left(\frac{ \sigmaipw }{ w_i } \right)^{\frac{p}{2} - 1} - 1 \right|
\\
&\leq 
(\frac{p}{2} - 1)(1+ \epsilon)^{\frac{p}{2} - 1}
\norm{w - \sigmapw}_1
\, ,
\end{align*}
where in the last step we applied \Cref{lemma:1mxc_upper} and that $\sigmapw \leq (1+\epsilon) w$. The result then follows from the fact that $\norm{w - \sigmapw}_1 \leq 3\epsilon \norm{\sigmapw}_1 = 3 \epsilon d$, by \cref{lem:abs_diff_2}.
\end{proof}

We are now ready to prove our main technical theorem.

\begin{theorem}[1- to 2-sided]
\label{thm:main_thm} Let $w$ be a one-sided $\epsilon$-approximate $\ell_p$-Lewis weight of $\ma$ and let
\[
v = \iter_p(w) = w^{1-\frac{p}{2}} \sigmapw^{\frac{p}{2}}.
\]
Then, for $\alpha = 3\epsilon (\frac{p}{2} - 1) (1 + \epsilon)^{\frac{p}{2}-1} d$, we have 
$(1-\alpha)\sigmaipv\leq v_{i}\leq(1+\alpha)\sigmaipv$.
\end{theorem}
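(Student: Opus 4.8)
The plan is to chain the two preceding lemmas. First, \Cref{lem:distances} shows that $\alpha_0 \defeq \norm{\iter_p(w) - \sigmapw}_1 \le \alpha$. Applying \Cref{lem:mult_diff} to $w$, so that its parameter ``$\alpha$'' is our $\alpha_0$, gives the Loewner sandwich
\[
(1-\alpha_0)\,\mh \;\preceq\; \ma^\top \mv^{1-\frac{2}{p}}\ma \;\preceq\; (1+\alpha_0)\,\mh, \qquad \mh \defeq \ma^\top\mw^{1-\frac{2}{p}}\ma, \ \ \mv = \Diag(v).
\]
Since $0 \le \alpha_0 \le \alpha$ and $\mh \succeq 0$, the same inequalities hold with $\alpha$ in place of $\alpha_0$. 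We may assume $\alpha < 1$: otherwise the lower bound $(1-\alpha)\sigmaipv \le v_i$ is vacuous, and the upper bound follows from the upper Loewner inequality alone, exactly as in the argument below.

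Next I would turn this quadratic-form comparison into a comparison of the leverage scores $\sigmaipv$. All matrices above have range equal to $\mathrm{range}(\ma^\top\ma)$, because $\mw^{1-\frac{2}{p}}$ and $\mv^{1-\frac{2}{p}}$ are positive definite and $1-\alpha>0$; hence the order-reversing property of the pseudoinverse yields $\tfrac{1}{1+\alpha}\,\mh^+ \preceq (\ma^\top\mv^{1-\frac{2}{p}}\ma)^+ \preceq \tfrac{1}{1-\alpha}\,\mh^+$. Conjugating by $\ma$ on the left and $\ma^\top$ on the right preserves the Loewner order, so comparing the $i$-th diagonal entries gives
\[
\tfrac{1}{1+\alpha}\,\big[\ma\,\mh^+\,\ma^\top\big]_{ii} \;\le\; \big[\ma\,(\ma^\top\mv^{1-\frac{2}{p}}\ma)^+\,\ma^\top\big]_{ii} \;\le\; \tfrac{1}{1-\alpha}\,\big[\ma\,\mh^+\,\ma^\top\big]_{ii}.
\]

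The final ingredient is the algebraic identity that makes the statement collapse to the desired form. By the definition of leverage scores, $\big[\ma\,\mh^+\,\ma^\top\big]_{ii} = a_i^\top(\ma^\top\mw^{1-\frac{2}{p}}\ma)^+ a_i = \sigmaipw / w_i^{1-\frac{2}{p}}$, and the defining relation $\iter_p(w)_i = w_i\,(\sigmaipw/w_i)^{p/2}$ rearranges to $\sigmaipw = w_i^{1-\frac{2}{p}}\,v_i^{2/p}$, so that $\big[\ma\,\mh^+\,\ma^\top\big]_{ii} = v_i^{2/p}$. Likewise $\sigmaipv = v_i^{1-\frac{2}{p}}\,\big[\ma\,(\ma^\top\mv^{1-\frac{2}{p}}\ma)^+\,\ma^\top\big]_{ii}$. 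Substituting the diagonal bound above, $\sigmaipv$ is seen to lie between $v_i^{1-\frac{2}{p}}\cdot\frac{v_i^{2/p}}{1+\alpha} = \frac{v_i}{1+\alpha}$ and $\frac{v_i}{1-\alpha}$, which rearranges to $(1-\alpha)\sigmaipv \le v_i \le (1+\alpha)\sigmaipv$, as claimed.

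I do not expect a genuine obstacle: \Cref{lem:mult_diff} and \Cref{lem:distances} carry all the analytic weight, and what remains is bookkeeping. The two points that need care are (i) justifying the pseudoinverse manipulations, which is exactly why one records that all the relevant matrices share $\mathrm{range}(\ma^\top\ma)$, and (ii) tracking the exponents so that the factor $v_i^{1-\frac{2}{p}}$ coming from the leverage score of $\mv^{\frac12-\frac1p}\ma$ and the factor $v_i^{2/p}$ coming from the identity $\sigmaipw/w_i^{1-\frac{2}{p}} = v_i^{2/p}$ multiply back to precisely $v_i$.
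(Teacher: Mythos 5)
Your proposal is correct and follows essentially the same path as the paper's own proof: chain \Cref{lem:distances} to bound $\alpha_0 = \|\iter_p(w)-\sigmapw\|_1$, apply \Cref{lem:mult_diff} to obtain the Loewner sandwich on $\ma^\top \mv^{1-\frac2p}\ma$ versus $\ma^\top \mw^{1-\frac2p}\ma$, pass to pseudoinverses and diagonal entries, and close with the identity $v_i^{2/p}=\sigmaipw/w_i^{1-2/p}=[\ma(\ma^\top\mw^{1-2/p}\ma)^+\ma^\top]_{ii}$. The only additions beyond the paper's argument are minor: you explicitly note that the matrices share $\mathrm{range}(\ma^\top\ma)$ to justify the pseudoinverse manipulation, and you flag the (vacuous) case $\alpha\ge 1$; both are fine and do not change the route.
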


\begin{proof}
We have that
\begin{align*}
\frac{\sigma(\mv^{\frac{1}{2}-\frac{1}{p}}\ma)_{i}}{v_{i}} & =v_{i}^{-\frac{2}{p}}\cdot\left[\ma(\ma^{\top}\mv^{1-\frac{2}{p}}\ma)^{-1}\ma^{\top}\right]_{ii}=\frac{\left[\ma(\ma^{\top}\mv^{1-\frac{2}{p}}\ma)^{+}\ma^{\top}\right]_{ii}}{\left[\ma(\ma^{\top}\mw^{1-\frac{2}{p}}\ma)^{+}\ma^{\top}\right]_{ii}}\,.
\end{align*}
We wish to lower and upper bound this fraction by $1/(1+\alpha)$ and $1/(1-\alpha)$, respectively.
For this it suffices to prove that
\[
\frac{1}{1+\alpha} \left( \ma^{\top}\mw^{1-\frac{2}{p}} \ma \right)^+
\preceq \left( \ma^{\top}\mv^{1-\frac{2}{p}}\ma \right)^+
\preceq \frac{1}{1-\alpha} \left( \ma^{\top}\mw^{1-\frac{2}{p}} \ma \right)^+,
\]
which is equivalent to showing that
\[
(1-\alpha) \ma^{\top}\mw^{1-\frac{2}{p}}\ma 
\preceq \ma^{\top}\mv^{1-\frac{2}{p}}\ma
\preceq (1+\alpha) \ma^{\top}\mw^{1-\frac{2}{p}}\ma.
\]
By \cref{lem:mult_diff} and \cref{lem:distances} 
 this holds for $\alpha = \| \iter_p(w) - \sigmapw \|_1 \leq 3\epsilon (\frac{p}{2} - 1) (1 + \epsilon)^{\frac{p}{2}-1} d$.
\end{proof}

For $0< \eps \leq 1/p$, the approximation factor in \cref{thm:main_thm} is $\alpha = O(\eps p d)$.

\section{Obtaining multiplicative estimates of Lewis weights}
\label{sec:alg}

Through \cref{thm:main_thm}, we can show that a variation on Lee's algorithm can be used to compute multiplicative $\eps$-approximations of the Lewis weights using approximate leverage score computations at the expense of a $\poly(d,p)$-overhead in precision, see \cref{thm:algo}. 

We first prove a simple lemma that shows that two-sided approximation is ``stable'' with respect to a multiplicative change (i.e., if $v$ is a two-sided approximation then so is multiplicative approximation of it).  

\begin{lemma} \label{lem:stab}
    Let $\gamma \geq 1$.
    Let $v,\tilde v \in \R^n_{> 0}$ be such that $\gamma^{-1} \tilde v_i \leq v_i \leq \gamma \tilde v_i$ for all $i \in [n]$. Then 
    \[
    \gamma^{-1} \frac{\sigma_i(\widetilde \mv^{\frac{1}{2}-\frac{1}{p}}\ma)}{\tilde v_i} \leq \frac{\sigma_i(\mv^{\frac{1}{2}-\frac{1}{p}}\ma)}{v_i} \leq \gamma \frac{\sigma_i(\widetilde \mv^{\frac{1}{2}-\frac{1}{p}}\ma)}{\tilde v_i}
    \]
\end{lemma}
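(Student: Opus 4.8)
The plan is to reduce the claimed two-sided inequality to a pair of Loewner-order inequalities between the Gram matrices $\ma^\top \mv^{1-\frac{2}{p}}\ma$ and $\ma^\top \widetilde\mv^{1-\frac{2}{p}}\ma$, exactly as in the proof of \Cref{thm:main_thm}. First I would rewrite the ratio in a scale-invariant form: using that $\sigma_i(\mv^{\frac12-\frac1p}\ma) = v_i^{-\frac{2}{p}} \left[\ma(\ma^\top \mv^{1-\frac{2}{p}}\ma)^+\ma^\top\right]_{ii} \cdot v_i^{\frac{2}{p}}$, wait — more carefully, $\sigma_i(\mv^{\frac12-\frac1p}\ma) = [\mv^{\frac12-\frac1p}\ma (\ma^\top \mv^{1-\frac2p}\ma)^+ \ma^\top \mv^{\frac12-\frac1p}]_{ii} = v_i^{1-\frac2p}\left[\ma(\ma^\top\mv^{1-\frac2p}\ma)^+\ma^\top\right]_{ii}$, so
\[
\frac{\sigma_i(\mv^{\frac12-\frac1p}\ma)}{v_i} = v_i^{-\frac2p}\left[\ma(\ma^\top\mv^{1-\frac2p}\ma)^+\ma^\top\right]_{ii}.
\]
The factor $v_i^{-\frac2p}$ compared against $\tilde v_i^{-\frac2p}$ contributes a multiplicative error of at most $\gamma^{2/p} \le \gamma$ (since $p \ge 2$, so $2/p \le 1$), in the correct direction. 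So the remaining task is to show $\left[\ma(\ma^\top\mv^{1-\frac2p}\ma)^+\ma^\top\right]_{ii}$ and the same with $\tilde\mv$ differ multiplicatively by at most $\gamma^{1-2/p}$, which again is at most $\gamma$.

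The key step is the Loewner comparison: since $\gamma^{-1}\tilde v_i \le v_i \le \gamma \tilde v_i$ for all $i$ and the map $t \mapsto t^{1-\frac2p}$ is monotone for $p\ge 2$, we get $\gamma^{-(1-\frac2p)}\widetilde\mv^{1-\frac2p} \preceq \mv^{1-\frac2p} \preceq \gamma^{1-\frac2p}\widetilde\mv^{1-\frac2p}$ as diagonal matrices, hence conjugating by $\ma$ (which preserves the Loewner order),
\[
\gamma^{-(1-\frac2p)}\ma^\top\widetilde\mv^{1-\frac2p}\ma \preceq \ma^\top\mv^{1-\frac2p}\ma \preceq \gamma^{1-\frac2p}\ma^\top\widetilde\mv^{1-\frac2p}\ma.
\]
Taking pseudoinverses reverses the order (here one must note both Gram matrices have the same range, namely the row space of $\ma$, so this is valid), and then the diagonal entries of $\ma(\cdot)^+\ma^\top$ inherit the same multiplicative bounds. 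Multiplying the $v_i^{-2/p}$ error and the pseudoinverse-diagonal error gives a total factor of $\gamma^{2/p}\cdot\gamma^{1-2/p} = \gamma$, which is exactly the claimed bound.

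I expect the only subtlety — not really an obstacle — to be the care needed when passing to pseudoinverses, i.e., checking that $\ma^\top\mv^{1-\frac2p}\ma$ and $\ma^\top\widetilde\mv^{1-\frac2p}\ma$ share the same column/row space (both equal to $\mathrm{range}(\ma^\top)$, since $\mv,\widetilde\mv$ are positive definite diagonal matrices), so that the order-reversal property of the pseudoinverse under the Loewner order applies cleanly; alternatively one can invoke exactly the argument already used in \Cref{thm:main_thm} rather than reproving it. Everything else is a direct chain of monotonicity and conjugation steps, with the bookkeeping of exponents ($2/p$ versus $1-2/p$ summing to $1$) being the point to get right.
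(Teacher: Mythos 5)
Your proof is correct and follows essentially the same route as the paper's: rewrite $\sigma_i(\mv^{\frac12-\frac1p}\ma)/v_i = v_i^{-\frac2p}\bigl[\ma(\ma^\top\mv^{1-\frac2p}\ma)^+\ma^\top\bigr]_{ii}$, bound the scalar factor by $\gamma^{2/p}$ and the pseudoinverse factor by $\gamma^{1-2/p}$ via the Loewner order, and multiply. The only difference is presentational — you spell out the pseudoinverse order-reversal and shared-range caveat, which the paper leaves implicit.
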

\begin{proof}
We first prove the first inequality. We have that 
        \begin{align*}
        \frac{\sigma_i(\widetilde \mv^{\frac{1}{2}-\frac{1}{p}}\ma)}{\tilde v_i} &= \tilde v_{i}^{-\frac{2}{p}}\cdot\left[\ma(\ma^{\top} \widetilde \mv^{1-\frac{2}{p}}\ma)^{+}\ma^{\top}\right]_{ii}
        \leq \gamma \, v_{i}^{-\frac{2}{p}}\cdot\left[\ma(\ma^{\top}  \mv^{1-\frac{2}{p}}\ma)^{+} \ma^{\top}\right]_{ii}
        =\gamma \frac{\sigma_i( \mv^{\frac{1}{2}-\frac{1}{p}}\ma)}{v_i}
        \end{align*}
        where the inequality uses the estimates $\tilde v_i^{-\frac2p} \leq \gamma^{\frac2p} v_i^{-\frac2p}$ and $(\ma^\top \widetilde \mv^{1-\frac2p} \ma)^+ \preceq \gamma^{1-\frac{2}{p}} (\ma^\top \mv^{1-\frac2p} \ma)^+$. The second inequality of the lemma follows by exchanging the roles of $v$ and $\tilde v$. 
\end{proof}

We can now state our algorithm and prove its correctness. 

\begin{algorithm}[ht]
\caption{Two-sided Lewis weight approximation} \label{alg:low-precision-Lewis}
\Input{$\ma \in \R^{n \times d}$, accuracy $0<\alpha<1$, $p \geq 2$}
\Output{A vector $v \in \R^n$}

\BlankLine
Let $w_i^{(1)} = d/n$ for all $i \in [n]$, $\eps_1 = \alpha/(100 p d)$, $\eps_2 = \alpha/(3p)$, $T = \lceil 2\log(n/d)/\eps_1 \rceil$;

\For{$k=1,\ldots,T-1$}{ 
    Let $w^{(k+1)}$ be $\eps_1/4$-estimates of $\sigma((\mw^{(k)})^{\frac12 - \frac1p} \ma)$\;
}

Let $w  = \frac{1}{T} \sum_{k=1}^T w^{(k)}$ and $s$ be $\eps_2$-estimates of $\sigmapw$\;

\Return{$(v_i)_i$ with $v_i  = w_i (s_i/w_i)^{\frac{p}{2}}$ for all $i \in [n]$}
\end{algorithm}

\begin{theorem}[Approximate Lewis weights from approximate leverage scores] \label{thm:algo}
For any matrix $\ma \in \R^{n \times d}$, accuracy $0<\alpha<1$ and $p \geq 2$, \cref{alg:low-precision-Lewis} outputs a two-sided $\alpha$-approximation of the $\ell_p$-Lewis weights of $\ma$.
\end{theorem}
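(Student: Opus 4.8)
The plan is to analyze \cref{alg:low-precision-Lewis} in three steps. Write $w = \tfrac1T\sum_{k=1}^{T}w^{(k)}$ for the averaged iterate and observe that the output satisfies $v_i = w_i(s_i/w_i)^{p/2}$, i.e.\ $v$ equals $\iter_p(w)$ with the exact quantity $\sigmaipw$ replaced by its $\eps_2$-estimate $s_i$. \emph{Step (i):} the averaged iterate $w$ is a one-sided $\eta$-approximate $\ell_p$-Lewis weight of $\ma$ for some $\eta = O(\eps_1)$; this is the guarantee of Lee's iteration \cite{lee16}, which already tolerates leverage scores computed to accuracy $O(\eps_1)$, but I would also re-derive it to pin down $\eta$ (sketched below). \emph{Step (ii):} by \cref{thm:main_thm} applied with $\eps = \eta$, the vector $u \defeq \iter_p(w)$ is a two-sided $\beta$-approximation of the $\ell_p$-Lewis weights, with $\beta = 3\eta(\tfrac p2-1)(1+\eta)^{\frac p2-1}d = O(\eta pd)$; since $\eps_1 = \alpha/(100pd)$ this is $\beta = O(\alpha/100)$. \emph{Step (iii):} $v$ is a bounded multiplicative perturbation of $u$, and by \cref{lem:stab} this preserves two-sidedness up to a comparable parameter, which for the constants fixed in the algorithm is at most $\alpha$.

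Step (iii) is routine, so I would carry it out. From $(1-\eps_2)\sigmaipw \le s_i \le (1+\eps_2)\sigmaipw$ and the monotonicity of $t \mapsto t^{p/2}$ on $\R_{\ge 0}$ we get $(1-\eps_2)^{p/2}u_i \le v_i \le (1+\eps_2)^{p/2}u_i$, so $v$ is a $\gamma$-multiplicative approximation of $u$ in the sense of \cref{lem:stab} with $\gamma \defeq \max\{(1+\eps_2)^{p/2},(1-\eps_2)^{-p/2}\}$; since $\eps_2 = \alpha/(3p)$ and $\alpha < 1$, one has $\gamma \le e^{\alpha/3} \le 1+\tfrac\alpha2$. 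Applying \cref{lem:stab} with $(\tilde v,v) = (u,v)$ gives
\[
\gamma^{-1}\frac{\sigma_i(\Diag(u)^{\frac12-\frac1p}\ma)}{u_i} \le \frac{\sigmaipv}{v_i} \le \gamma\frac{\sigma_i(\Diag(u)^{\frac12-\frac1p}\ma)}{u_i},
\]
and combining with $\tfrac1{1+\beta} \le \frac{\sigma_i(\Diag(u)^{\frac12-\frac1p}\ma)}{u_i} \le \tfrac1{1-\beta}$ from step (ii) yields $\tfrac1{\gamma(1+\beta)} \le \frac{\sigmaipv}{v_i} \le \tfrac\gamma{1-\beta}$. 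It then remains to check $\gamma(1+\beta) \le 1+\alpha$ and $\gamma(1-\alpha) \le 1-\beta$; both hold since $\gamma \le 1+\tfrac\alpha2$ and $\beta = O(\alpha/100)$, the factor $100$ in $\eps_1$ leaving room for the implied constants. Hence $v$ is a two-sided $\alpha$-approximation.

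The real content is step (i). The $\ell_1$ bound is immediate: $\|w^{(1)}\|_1 = d$, and for $k \ge 2$ the vector $w^{(k)}$ is an $\eps_1/4$-estimate of a vector of leverage scores, whose entries sum to at most $d$, so $\|w^{(k)}\|_1 \le (1+\eps_1/4)d$ and hence $\|w\|_1 \le (1+\eps_1/4)d$. The entrywise bound $\sigmaipw \le (1+O(\eps_1))w_i$ is the substance of Lee's analysis, which I would reproduce carrying the $\eps_1/4$ slack; the case $p=2$ is trivial (the iteration is constant), so assume $p>2$ and consider the concave potential $G(y) \defeq \logdet(\ma^{\top}\Diag(y)^{1-\frac2p}\ma) - (1-\tfrac2p)\|y\|_1$, whose unique maximizer is $w_p(\ma)$ by first-order optimality (restricting to the column space of $\ma^{\top}$, and replacing $d$ by $\operatorname{rank}(\ma)$, when $\ma$ is rank-deficient; we also assume $n \ge d$, the complementary case being trivial). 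Writing $\Phi(y) \defeq \sigma(\Diag(y)^{\frac12-\frac1p}\ma)$, concavity of $G$ together with $\langle \nabla G(y),\Phi(y)-y\rangle = (1-\tfrac2p)\sum_i(\Phi(y)_i-y_i)^2/y_i$ shows that an exact step increases $G$ by at least $(1-\tfrac2p)\sum_i(\Phi(y)_i-y_i)^2/y_i \ge 0$, while replacing $\Phi(w^{(k)})$ by an $\eps_1/4$-estimate changes $G$ by at most $O((1-\tfrac2p)\eps_1 d)$ (perturbing the eigenvalues of $\ma^{\top}\Diag(\cdot)^{1-2/p}\ma$ and the linear term); moreover $G(w_p(\ma)) - G(w^{(1)}) \le (1-\tfrac2p)\,d\log(n/d)$ using $w^{(1)} = (d/n)\mathbf 1$ and $w_p(\ma)_i \le 1$. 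Summing over the $T-1$ steps, dividing by $1-\tfrac2p$, using $G(w^{(T)}) \le G(w_p(\ma))$ and $T = \Theta(\log(n/d)/\eps_1)$, gives $\sum_k\sum_i (\Phi(w^{(k)})_i - w^{(k)}_i)^2/w^{(k)}_i = O(T\eps_1 d)$. The step I expect to be the main obstacle is the last one: converting this control on the \emph{average over $k$} of the squared fixed-point residual of the iterates into the entrywise multiplicative bound on the \emph{single} averaged vector $w$. Here one uses that $t\mapsto t^{1-2/p}$ is concave and $X \mapsto X^{+}$ is operator convex, so $(\ma^{\top}\mw^{1-\frac2p}\ma)^{+} \preceq \tfrac1T\sum_k(\ma^{\top}(\mw^{(k)})^{1-\frac2p}\ma)^{+}$, and feeds this into the residual bound; this is the part that genuinely exploits convergence of the iteration and in particular must cover $p \ge 4$, where neither $\Phi$ nor $\iter_p$ is a contraction, and it is exactly the ingredient we import from \cite{lee16}.
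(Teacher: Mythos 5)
Your proposal follows the paper's proof essentially step for step: invoke Lee's guarantee that the averaged iterate $w$ is one-sided approximate, upgrade it via \cref{thm:main_thm} to a two-sided approximation, and use \cref{lem:stab} to absorb the final $\eps_2$-error in the leverage-score estimates $s$. The only difference is that the paper simply cites \cite[Theorem~5.3.4]{lee16} for step (i), whereas you sketch a re-derivation via the concave potential $G$ — and then correctly note that the decisive passage from an averaged residual bound to an entrywise bound on $w$ is precisely what you import from \cite{lee16}, so in the end you rely on the same external ingredient the paper does.
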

\begin{proof}
Steps 1.-4.~of the algorithm correspond to Algorithm 6 by Lee \cite{lee16}.
In \cite[Theorem~5.3.4]{lee16} it is shown that the resulting $w$ satisfies $w_i/\sigmaipw \geq \exp(-\eps_1)$ and hence $\sigmaipw \leq \exp(\eps_1) w_i \leq (1+2\eps_1) w_i$.
Moreover, $w$ is an average over $\eps_1/4$-approximate leverage scores so that $\|w\|_1 \leq (1+\eps_1/4) d$, and hence $w$ is a one-sided $2\eps_1$-approximation of the Lewis weights.
By our \cref{thm:main_thm} this implies the vector $\iter_p(w)$ is a two-sided Lewis weight approximation with approximation factor 
\[
6\epsilon_1 (\frac{p}{2} - 1) (1 + 2\epsilon_1)^{\frac{p}{2}-1} d  \leq \alpha/3
\]
by our choice of $\eps_1$.
Finally, we use $\eps_2$-estimates $s$ of $\sigmapw$ to define $v_i = w_i (s_i/w_{i})^{\frac{p}{2}}$, so that
\[
(1-\eps_2)^{p/2} \iter_p(w)_i
\leq v_i
\leq (1+\eps_2)^{p/2} \iter_p(w)_i
\leq \frac{1}{(1-\eps_2)^{p/2}} \iter_p(w)_i.
\]
We can now apply \cref{lem:mult_diff} with $\gamma = 1/(1-\eps_2)^{p/2} \leq 1 + \alpha/3$ by our choice of $\eps_2$.
This implies that the~$v_i$'s are two-sided Lewis weight approximations satisfying
\[
(1 - \alpha/3)^2
\leq \frac{\sigma_i(\mv^{\frac12 - \frac1p} \ma)}{v_i}
\leq (1 + \alpha/3)^2.
\]
Using that $(1-\alpha/3)^2 \geq 1-\alpha$ and $(1+\alpha/3)^2 \leq 1+\alpha$, this proves the claim.
\end{proof}

This theorem implies that we can obtain two-sided $\alpha$-approximations of the Lewis weights by iteratively computing $\tO(p d/\alpha)$ many $O(\alpha/(pd))$-approximate leverage scores.
This proves \cref{thm:intro} stated in the introduction.

\section*{Acknowledgements}

Simon Apers was partially supported by French projects EPIQ (ANR-22-PETQ-0007), QUDATA (ANR-18-CE47-0010) and QUOPS (ANR-22-CE47-0003-01), and EU project QOPT (QuantERA ERA-NET Cofund 2022-25). Aaron Sidford was supported in part by a Microsoft Research Faculty Fellowship, NSF CAREER Award CCF-1844855, NSF Grant CCF-1955039, and a PayPal research award.

\bibliographystyle{alpha}
\bibliography{biblio}

\end{document}